\lstdefinestyle{Matlab}{
    language        = matlab,
    frame           = lines, 
    basicstyle      = \footnotesize,
    keywordstyle    = \color{blue},
    stringstyle     = \color{green},
    commentstyle    = \color{red}\ttfamily
}
\newtheorem{theorem}{Theorem}
\newtheorem{definition}[theorem]{Definition}
\newtheorem{observation}[theorem]{Observation}
\newtheorem{remark}[theorem]{Remark}
\newtheorem{lemma}[theorem]{Lemma}
\newtheorem{fact}[theorem]{Fact}
\newtheorem{proposition}[theorem]{Proposition}
\begin{document}

\title{PCP-free APX-Hardness of Nearest Codeword \\ and Minimum Distance \\
{\normalfont \normalsize  \emph{\textcolor{black}{in memory of 
Alexander Vardy}}}    
}

\author{
Vijay Bhattiprolu
\thanks{Department of Combinatorics \& Optimization, University of Waterloo}
\and
Venkatesan Guruswami
\thanks{Simons Institute for the Theory of Computing and Departments of EECS and Mathematics, UC Berkeley}
\and 
Xuandi Ren
\thanks{Department of EECS, UC Berkeley}
}

\date{}
\maketitle
\begin{abstract}
 We give simple deterministic reductions demonstrating the NP-hardness of approximating the nearest codeword problem and minimum distance problem within arbitrary constant factors (and almost-polynomial factors assuming NP cannot be solved in quasipolynomial time). The starting point is a simple NP-hardness result without a gap, and is thus ``PCP-free." Our approach is inspired by that of Bhattiprolu and Lee~\cite{BL24} who give a 
 PCP-free randomized reduction for similar problems over the integers and the reals.
 We leverage the existence of $\varepsilon$-balanced codes to derandomize and further simplify their reduction for the case of finite fields. 
\end{abstract}

\section{Introduction}

Error-correcting codes are fundamental mathematical objects with many applications across computing. A long line of work has been studying the complexity of basic computational problems on codes, especially the Nearest Codeword Problem (NCP) and the Minimum Distance Problem (MDP)~\cite{BMvT78,ABSS97,Var97,DMS93,CW12,AK14, M14} defined below.

\begin{definition}[Nearest Codeword Problem (NCP)\footnote{NCP is sometimes referred to as the Maximum Likelihood Decoding (MLD) problem in the literature.}]~\\
    An instance of NCP over a finite field $\mathbb F_q$ consists of an affine subspace $V \subseteq \mathbb F_q^{n}$, where $V$ is encoded using any basis. The goal is to find the minimum Hamming weight of any vector $x \in V$.
\end{definition}

\noindent
MDP is a homogeneous variant of NCP: 
\begin{definition}[Minimum Distance Problem (MDP)]~\\
    An instance of MDP over a finite field $\mathbb F_q$ consists of a  subspace $V \subseteq \mathbb F_q^{n}$. The goal is to find the minimum Hamming weight of any nonzero vector $x \in V$. 
\end{definition}

\smallskip\noindent\textbf{Our Results.}
We give a new elementary proof of the inapproximability of these problems, avoiding the PCP theorem. 
It has traditionally been much easier to show hardness for the non-homogeneous variant NCP compared to the homogeneous version MDP. 
Previous approaches first establish inapproximability of NCP as a quick corollary of the PCP theorem, and then employ non-trivial coding gadgets to give a gap-preserving reduction from NCP to MDP. 
However, in this work, we proceed in reverse. We first give a deterministic gap-producing reduction from homogeneous quadratic equations to MDP (using an appropriately chosen tensor product of a concatenated code) and then amplify the gap by simple tensoring. We then give an elementary gap-preserving reduction from MDP to NCP, utilizing a special ``distinguished-coordinate'' property in our hard MDP instance. As a result, we have the following two theorems:

\begin{theorem}
\label{thm:mdp}
For any finite field $\mathbb F_q$, it is NP-Hard to approximate MDP within any constant factor, and for any fixed $\varepsilon>0$, it is Quasi-NP-Hard to approximate MDP within $2^{\log^{1-\varepsilon}n}$.
\end{theorem}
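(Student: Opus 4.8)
\smallskip\noindent\textbf{Proof plan.}
The plan is to begin from a gapless, elementary NP-hardness --- deciding whether a \emph{homogeneous} quadratic system over $\mathbb{F}_q$ has a nonzero solution (immediate from satisfiability of arithmetic circuits over $\{+,\times\}$ after homogenizing with an extra variable) --- and to turn it into an MDP instance exhibiting first a \emph{constant}-factor gap between the YES and NO minimum distances, and then, by iterated tensoring, an arbitrary constant gap and ultimately a $2^{\log^{1-\varepsilon}n}$ gap. The one structural fact driving the amplification is the multiplicativity $d(C^{\otimes k})=d(C)^{k}$ of minimum distance under tensor powers, together with the fact that a generator matrix of $C^{\otimes k}$ is the $k$-fold Kronecker power of one for $C$ and hence efficiently writable.

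\emph{Step 1 (manufacturing a constant gap).} I would fix a constant-rate $\varepsilon$-balanced linear code $B$ over $\mathbb{F}_q$, so that every nonzero codeword has relative Hamming weight within $O(\varepsilon)$ of the balanced value, and let $E\colon\mathbb{F}_q^{n}\to\mathbb{F}_q^{N}$ be a \emph{systematic} encoding into a concatenated code with inner code $B$, so that $E(x)$ has weight $\Theta(N)$ for every $x\neq 0$. Setting $C:=E(\mathbb{F}_q^{n})$, work inside the tensor code $C\otimes C\subseteq\mathbb{F}_q^{N^{2}}$: since $E$ is systematic, the coordinates of $E(x)\otimes E(x)$ indexed by pairs of systematic positions are exactly the degree-$2$ monomials $x_ix_j$, so every equation of the input system --- together with the symmetry condition picking out the ``diagonal'' tensors --- becomes a \emph{linear} constraint, and I let $C_0\subseteq C\otimes C$ be the subcode cut out by all of these. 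In the YES case a nonzero solution $x$ gives $E(x)\otimes E(x)\in C_0$ of weight $O(N^{2})$, say at most $w$. In the NO case the aim is to show every nonzero codeword of $C_0$ has weight at least $\gamma w$ for a fixed $\gamma>1$: a candidate codeword of $C\otimes C$ that is lighter than $\gamma w$ and respects all the linear constraints should be forced, by the distance and balancedness properties of $B$, to be (essentially) a diagonal rank-one tensor $E(x)\otimes E(x)$, whose membership in $C_0$ then certifies a nonzero solution --- a contradiction. This deterministic structural dichotomy is exactly what replaces the random-rounding step of \cite{BL24}.

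\emph{Step 2 (amplification and the two conclusions).} Given $C_0$ of length $N_0$ with $d(C_0)\le w$ in the YES case and $d(C_0)\ge\gamma w$ in the NO case, the power $C_0^{\otimes k}$ has length $N_0^{k}$, dimension $(\dim C_0)^{k}$, minimum distance exactly $d(C_0)^{k}$, and hence distance gap $\gamma^{k}$, and its generator matrix is computable in time $\mathrm{poly}(N_0^{k})$. Choosing $k=O_{\gamma}(1)$ large enough yields NP-hardness of approximating MDP within any prescribed constant. Choosing $k\approx(\log n)^{(1-\varepsilon)/\varepsilon}$, which keeps the final length $N_0^{k}=2^{\mathrm{polylog}(n)}$ and the reduction quasipolynomial, makes $\gamma^{k}\ge 2^{\log^{1-\varepsilon}(N_0^{k})}$, so that a quasipolynomial-time $2^{\log^{1-\varepsilon}n}$-approximation for MDP would solve SAT in quasipolynomial time, i.e.\ Quasi-NP-hardness.

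\emph{Main obstacle.} Everything load-bearing is in Step 1: proving that in the NO case a sub-threshold-weight codeword of $C\otimes C$ satisfying all the linear constraints must \emph{genuinely} be a diagonal rank-one tensor $E(x)\otimes E(x)$ --- i.e.\ uniformly ruling out light higher-rank tensors, light off-diagonal rank-one tensors, and cancellations among them, and doing so with enough slack that the YES and NO distances differ by a fixed factor $\gamma>1$. This is precisely the ``pseudorandom spreading'' that $\varepsilon$-balancedness should supply in place of a random projection, but getting the parameters to line up so that $\gamma>1$ strictly is the heart of the construction; the tensoring in Step 2 is then just bookkeeping.
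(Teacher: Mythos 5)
Your overall plan matches the paper's approach closely: start from gapless NP-hardness of nonzero satisfiability of a homogeneous quadratic system over $\mathbb{F}_q$, embed the matrix space $\{X : X^T=X,\ Q_\ell(X)=0\}$ into a tensor code $C\otimes C$ for an $\varepsilon$-balanced code $C$ (your ``systematic'' trick achieves the same thing as the paper's map $X\mapsto CXC^T$), and amplify the resulting constant gap by tensoring. Step 2 (amplification) is correct and standard.

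The genuine gap is exactly the one you flag as the ``main obstacle,'' and the fix is not quite where your intuition points. You hope that ``distance and balancedness'' rule out light higher-rank tensors. In fact, the NO-side lower bound is supplied by a separate, purely combinatorial fact about linear codes over $\mathbb{F}_q$ that has nothing to do with balancedness: for any two linearly independent codewords of a code of minimum distance $d$, their joint support has size at least $(1+\frac{1}{q})d$ (because by pigeonhole one of the $q-1$ combinations $x-\lambda y$ collapses at least a $\frac{1}{q-1}$ fraction of the common support). Feeding this into the tensor structure shows that any codeword $CXC^T$ with $\mathrm{rank}(X)\geq 2$ has weight at least $(1+\frac{1}{q})d^2$. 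The $\varepsilon$-balancedness is used for the \emph{opposite} inequality: it guarantees the YES codeword $(Cx)(Cx)^T$ has weight at most $(1+\varepsilon)^2 d^2$, and choosing $\varepsilon=\Theta(1/q)$ small makes $(1+\varepsilon)^2 < 1+\frac{1}{q}$, giving $\gamma>1$. So the roles are: joint-support lemma for NO, balancedness for YES.

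Your enumeration of cases to rule out is also slightly off. ``Light off-diagonal rank-one tensors'' are eliminated for free by the symmetry constraint: a symmetric rank-one matrix over a field is a scalar multiple of $xx^T$, so membership in the subspace already forces the diagonal form, and homogeneity of the $Q_\ell$'s makes the scalar irrelevant. And ``cancellations among them'' is not a separate concern --- you are analyzing a single codeword, and the dichotomy rank $=1$ versus rank $\geq 2$ covers everything. With the joint-support lemma in hand, the soundness argument is a clean two-case analysis, not a pseudorandomness argument, and the parameters line up with room to spare.
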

\noindent
A problem is said to be Quasi-NP-Hard if it does not admit a polynomial time algorithm assuming NP cannot be solved in quasipolynomial time. 

\begin{theorem}
\label{thm:ncp}
For any finite field $\mathbb F_q$, it is NP-Hard to approximate NCP within any constant factor, and for any fixed $\varepsilon>0$, it is Quasi-NP-Hard to approximate NCP within $2^{\log^{1-\varepsilon}n}$.
\end{theorem}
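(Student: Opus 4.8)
The plan is to deduce Theorem~\ref{thm:ncp} from Theorem~\ref{thm:mdp} via a simple deterministic gap-preserving many-one reduction from MDP to NCP. The only extra ingredient is a structural feature of the hard MDP instances, which I would extract by inspecting the proof of Theorem~\ref{thm:mdp}: the subspaces $V \subseteq \mathbb F_q^n$ produced there come with a \emph{distinguished coordinate} $i^\star \in [n]$ such that, writing $t$ for the target distance bound and $g$ for the gap factor (an arbitrary constant, or $2^{\log^{1-\varepsilon}n}$ in the Quasi-NP-hard regime), in the completeness case there is a nonzero $c^\star \in V$ with $\mathrm{wt}(c^\star)\le t$ and $c^\star_{i^\star}\ne 0$, while in the soundness case every nonzero $v\in V$ has $\mathrm{wt}(v) > g\cdot t$. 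Such a coordinate should arise naturally: to make satisfiability of a \emph{homogeneous} quadratic system NP-hard one forces a designated variable to be nonzero (otherwise $0$ is always a solution), and that variable's image under the concatenation-and-tensoring construction is the coordinate on which the solution-derived codeword is supported; moreover tensoring preserves the property, since if $c^\star$ works for an inner code $C$ then $(c^\star)^{\otimes k}$ is nonzero at $(i^\star,\dots,i^\star)$ and $d(C^{\otimes k})=d(C)^k$.

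Given $(V,i^\star)$, the reduction outputs the affine subspace
\[
  V' \;=\; \{\, v \in V : v_{i^\star} = 1 \,\} \;\subseteq\; \mathbb F_q^n,
\]
whose affine basis is obtained from a basis of $V$ by Gaussian elimination in polynomial (indeed near-linear) time and essentially without changing the instance length; if $V'$ is empty, output a trivial NO instance of NCP. For completeness, rescaling $c^\star$ so that its $i^\star$-th entry equals $1$ yields an element of $V'$ of weight at most $t$. For soundness, every $v\in V'$ has $v_{i^\star}=1\ne 0$, hence is a nonzero element of $V$, so $\mathrm{wt}(v) > g\cdot t$; thus $\min_{v\in V'}\mathrm{wt}(v) > g\cdot t$. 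Therefore the NCP optimum is $\le t$ in the completeness case and $> g\cdot t$ in the soundness case, so the gap $g$ is preserved with no loss---the distinguished coordinate is already counted (with weight $1$) inside $\mathrm{wt}(c^\star)\le t$, so there is not even an additive term to absorb. Since the reduction is deterministic, polynomial time, and length-preserving up to constants, the NP-hardness and Quasi-NP-hardness of gap-MDP from Theorem~\ref{thm:mdp} transfer directly to NCP, which is exactly Theorem~\ref{thm:ncp}.

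The heart of the argument is thus not the reduction---which is a one-line affine restriction---but justifying the distinguished-coordinate property. This requires following, through the inner code of the concatenation (where $\varepsilon$-balancedness is used) and the subsequent tensor powers, precisely which codeword witnesses small minimum distance in the completeness case, and confirming that it can always be arranged to be nonzero at a fixed coordinate known to the reduction. A secondary point to check is that this coordinate is not identically zero over $V$ in the soundness case, so that $V'$ is a genuine NCP instance rather than vacuously empty, and that the bookkeeping is uniform across both the constant-factor and the almost-polynomial gap regimes.
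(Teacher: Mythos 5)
Your top-level plan matches the paper's exactly: isolate a ``distinguished coordinate'' in the hard MDP instances (Proposition~\ref{dist:mdp}), restrict to the affine slice where that coordinate equals $1$, and observe that completeness follows by rescaling and soundness follows because every element of the affine slice is a nonzero vector of $V$. You also correctly flag the only non-trivial step, namely establishing the distinguished-coordinate property, and correctly observe that tensoring preserves it via $(c^\star)^{\otimes k}$.

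Where your sketch goes astray is in the claimed mechanism for the distinguished coordinate. You write that the designated variable's image under the concatenation-and-tensoring construction \emph{is} the distinguished coordinate. That cannot be literally true: the output vector is $CXC^T$, and a single variable's contribution $X_{n,n}$ is smeared over all $N^2$ coordinates by the generator matrix $C$ (which has no standard-basis rows, being a generator of a balanced code). There is in general no coordinate of $CXC^T$ that isolates $X_{n,n}$. The paper sidesteps this by explicitly \emph{appending} an extra coordinate carrying the raw value $X_{n,n}$, i.e.\ it takes $V := \{(CXC^T, X_{n,n}) : Q_1(X)=\dots=Q_m(X)=0,\ X^T=X\}$. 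This costs only $+1$ in the completeness bound, which is absorbed in the gap, and soundness is unaffected because $(CXC^T,X_{n,n})$ nonzero forces $X$ nonzero. Your proposal would be complete once you replace your ``image of the variable'' heuristic with this explicit append-a-coordinate trick; everything else in your argument is correct and mirrors the paper.
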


\paragraph{Related Work.}
Proving NP-Hardness of exactly computing MDP was a question posed in \cite{BMvT78}, 
and was resolved in the affirmative by Vardy~\cite{Var97}, using as a gadget  
a Reed-Solomon code concatenated with the Hadamard code. The analog for MDP for point lattices, the Shortest Vector Problem (SVP), was shown to be NP-hard under randomized reductions by Ajtai~\cite{Ajtai98}.\footnote{The NP-hardness of SVP under deterministic reductions still remains open.} Micciancio~\cite{Micciancio01} then established hardness of approximating SVP within a constant factor bounded away from $1$ via a reduction from the affine version, the Closest Vector Problem (CVP), using ``locally dense'' lattices as key gadgets.
Dumer, Micciancio and Sudan~\cite{DMS93} adapted the locally dense gadgets framework to the setting of linear codes and established constant-factor inapproximability of MDP over any finite field, under randomized reductions (the inapproximability factor can be amplified to any desired constant via tensoring).
A locally dense code is a code of minimum distance $d$, that admits a Hamming ball of 
smaller radius ($(1-\varepsilon)d$ for some constant $\varepsilon>0$) containing exponentially many codewords. 

Cheng and Wan~\cite{CW12} derandomized the proof of \cite{DMS93} by giving a deterministic construction of a locally dense gadget over any finite field. 
In particular they give an explicit Hamming ball of radius $0.67d$ that contains exponentially many codewords of an explicit code of distance $d$.
The proof of Cheng and Wan's result is fairly deep, making use of Weil's character sum estimate. 

Austrin and Khot~\cite{AK14} gave a simpler proof of deterministic hardness of MDP by making use of tensor codes. 
Building on this, Micciancio~\cite{M14} proved that for the special case of $\mathbb{F}_2$, the tensoring of any base code with large distance yields a significantly simpler deterministic construction of a locally dense gadget than the proof of \cite{CW12}. 

More recently, Bhattiprolu and Lee~\cite{BL24} observed that if the base code $C$ being tensored has a weak version of the local density property (i.e. there is a Hamming ball of radius $(1+\varepsilon) d$ containing exponentially many codewords in $C$) and an additional property that linearly independent minimal weight codewords do not overlap much in support (which always holds over finite fields), then the reduction in \cite{AK14} can be simplified (in particular bypassing the PCP theorem) as well as generalized to larger fields (including the reals). However, \cite{BL24} uses a random base code, and therefore the reduction is randomized. 

In this work, we derandomize (and further simplify) the reduction in \cite{BL24} for the special case of finite fields. We show that the tensoring of any \emph{$\varepsilon$-balanced 
code} (all of whose nonzero codewords have Hamming weights within a $(1 \pm \varepsilon)$ factor of each other) can be used to give a gap-producing reduction for MDP from satisfiability of homogeneous quadratic equations. 
We choose the base code to be a low rate Reed-Solomon code concatenated with a Hadamard code (which is $\varepsilon$-balanced). This gives a new deterministic proof of the hardness of MDP, that is simpler than \cite{AK14} in two ways; first the PCP theorem is not used, and secondly, the proof of the code construction is elementary whereas 
in \cite{AK14} the proof (for fields of size $\geq 3$) uses Viola’s~\cite{Viola09} construction of a pseudorandom generator for low degree polynomials.

A recent work \cite{BHIR24} also achieves constant inapproximability of NCP without using the PCP theorem (using multiplication codes). They even manage at most constant factor blow-up on the parameters when reducing from 3SAT, which leads to an exponential runtime lower bound for NCP (under the Exponential Time Hypothesis). Our reduction is nevertheless interesting due to its simplicity and also since it works essentially verbatim for MDP.

\section{Preliminaries}
\label{sec:pre}

\subsection{Homogeneous Quadratic Equations}
Any homogeneous quadratic polynomial $p:\mathbb{F}_q^n\to\mathbb{F}_q$ may be written in the form $p(x) = \sum_{i,j\in [n]}Q[i,j]x_ix_j$ for some coefficient matrix $Q=[Q[i,j]]_{i,j\in [n]} \in \mathbb{F}_q^{n\times n}$. For $X\in \mathbb{F}_q^{n\times n}$, let $Q(X):= \sum_{i,j\in [n]}Q[i,j] X[i,j]$, 
so that $p(x) = Q(xx^T)$ for all $x\in\mathbb{F}_q^n$. 
It will be convenient for us to encode homogeneous quadratics by their coefficient matrix Q. 

To prove APX-hardness of MDP, we start with the
NP-hardness of determining whether a system of homogeneous quadratic 
equations is satisfiable:
\begin{proposition}[Homogeneous Quadratic Equations Hardness]~\\
\label{prop:quadeq}
Fix any finite field $\mathbb F_q$. Given $n$ variables $x_1,\ldots,x_n$ and $m$ quadratic equations \\ $Q_1(xx^T)=0,\dots , Q_m(xx^T)=0$, it is NP-hard to distinguish between the following two cases:
    \begin{itemize}
    \itemsep=0ex
        \item (YES) There exists a non-zero vector $x \in \mathbb \{0,1\}^n$ satisfying all $m$ equations.
        \item (NO) There is no non-zero $x \in \mathbb F_q^n$ satisfying all $m$ equations.
    \end{itemize}
Furthermore, the hardness holds with the distinguished-coordinate property, i.e., in the YES case, there is a non-zero vector $x$ satisfying all equations as well as $x_n=1$.
\end{proposition}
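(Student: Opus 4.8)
The plan is to reduce directly from $3$-SAT, arithmetizing a formula into a homogeneous quadratic system. Given a $3$-CNF formula $\phi$ on Boolean variables $y_1,\dots,y_k$ with clauses $C_1,\dots,C_m$, I would introduce one extra ``homogenizing'' variable $z$ --- which will play the role of the distinguished coordinate $x_n$ --- and one auxiliary variable $w_j$ per clause $C_j$, so that the system has $n=k+m+1$ variables. The intended solution sets $z=1$ and lets $y_i\in\{0,1\}$ encode the truth value of $y_i$; homogeneity is preserved throughout by carrying an extra factor of $z$ wherever a drop in degree would otherwise occur.

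Concretely, I would write down three families of homogeneous degree-two equations. \textbf{(Booleanity of the $y_i$.)} For each $i\in[k]$, impose $y_i^2-y_iz=0$; over a field this forces $y_i\in\{0,z\}$ when $z\neq0$ and $y_i=0$ when $z=0$. \textbf{(Clause gadget.)} For a clause $C_j$, let $N_1,N_2,N_3$ be the linear forms that vanish exactly when the corresponding literal is satisfied: $N_a=z-y_i$ if the $a$-th literal is $y_i$, and $N_a=y_i$ if it is $\overline{y_i}$. A clause is satisfied precisely when $N_1N_2N_3=0$; to degree-reduce this cubic I impose $w_jz-N_1N_2=0$ together with $w_jN_3=0$. \textbf{(Booleanity of the $w_j$.)} For each $j$, also impose $w_j^2-w_jz=0$; this is needed only to exclude spurious solutions in which $z=0$ but some $w_j\neq0$. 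Each equation has no constant or linear part (linear terms like $y_i$ appear only as $y_iz$), so all are of the required form $Q(xx^T)=0$.

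I would then verify the two directions. In the YES direction, a satisfying assignment $b\in\{0,1\}^k$ of $\phi$ yields the solution $z=1$, $y_i=b_i$, and $w_j$ equal to the value of $N_1N_2$ at this point (which lies in $\{0,1\}$); every coordinate is in $\{0,1\}$, the vector is non-zero because $x_n=z=1$, and all equations hold, so the distinguished-coordinate property is met. In the NO direction, take any $x\in\mathbb F_q^n$ satisfying all equations. If $z\neq0$, Booleanity forces $y_i=b_iz$ with $b_i\in\{0,1\}$; then each $N_a=c_az$ with $c_a\in\{0,1\}$ recording whether the corresponding literal is falsified, the equation $w_jz=N_1N_2$ gives $w_j=c_1c_2z$, and $w_jN_3=0$ becomes $c_1c_2c_3z^2=0$, so $c_1c_2c_3=0$ and clause $C_j$ is satisfied by $b$ --- contradicting unsatisfiability of $\phi$. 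If $z=0$, Booleanity of the $y_i$ and of the $w_j$ forces all of them to $0$, so $x=0$. Hence in the NO case the zero vector is the only $\mathbb F_q$-solution. The reduction is plainly deterministic and polynomial time.

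The two points that require care --- and that dictate the design of the gadget --- are (i) keeping every equation genuinely homogeneous of degree two, which is why $z$ is introduced and why Booleanity is phrased as $y_i^2=y_iz$ and clause satisfaction via $w_jz=N_1N_2$ rather than $w_j=N_1N_2$; and (ii) ruling out spurious non-zero solutions living on the ``$z=0$ slice'', which is why the extra equations $w_j^2=w_jz$ are included and where one uses that $\mathbb F_q$ is an integral domain, so that $v^2=0$ implies $v=0$. I expect (ii) to be the only non-routine point; everything else is bookkeeping.
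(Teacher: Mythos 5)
Your proof is correct, and it takes a slightly different route from the paper's. You reduce from $3$-SAT, while the paper reduces from \textsc{Circuit-SAT} with fan-in-$2$ gates. The shared core is identical: a homogenizing variable $z$ serving as the distinguished coordinate, Booleanity constraints of the form $v^2 = vz$, and a soundness dichotomy between $z=0$ (which forces the all-zero vector) and $z \neq 0$ (which, after dividing through by $z$, yields a satisfying assignment). The place where the reductions diverge is the handling of the logical constraints: each fan-in-$2$ gate already gives a homogeneous quadratic directly --- an AND gate $y_k = y_i \wedge y_j$ becomes $x_k^2 = x_i x_j$, an OR gate becomes $z^2 - x_k^2 = (z-x_i)(z-x_j)$, a NOT gate becomes $z^2 - x_k^2 = x_i^2$ --- so the paper's reduction needs no auxiliary variables. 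A $3$-CNF clause is naturally cubic ($N_1 N_2 N_3 = 0$), so you must introduce an auxiliary variable $w_j$ per clause to degree-reduce, and then add the extra Booleanity $w_j^2 = w_j z$ to kill spurious nonzero solutions on the $z=0$ slice (which you correctly flagged as the only non-routine point). Both reductions are elementary and polynomial-time; the \textsc{Circuit-SAT} version is marginally shorter because the source problem already comes with degree-$2$ local constraints built in, whereas your $3$-SAT version is more self-contained for readers who prefer CNF as the canonical starting point.
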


Note that above the completeness guarantees a solution $x \in \{0,1\}^n$ whereas the soundness rules out $x \in \mathbb F_q^n$---the above is thus a promise problem. 
\Cref{prop:quadeq} is proved via reduction from the circuit satisfiability (\textsc{Circuit-SAT}) problem. The proof follows the standard template for exact NP-Hardness results, and we defer it to \Cref{sec:quadeq}.

\subsection{Tensor Codes and Distance Amplification}

We show hardness of MDP by first generating a constant factor gap and then using the standard observation that the minimum distance of a code is multiplicative under the usual tensor product operation, which we prove below for completeness.

The tensor product of two subspaces $U\subseteq \mathbb{F}_q^n$ and $V\subseteq \mathbb{F}_q^m$, denoted by $U\otimes V$ may be defined as the space of matrices $M\in \mathbb{F}_q^{n \times m}$ such that every row of $M$ lies in $V$ and every column of $M$ lies in $U$. 

Let $\|U\|_0$ denote the minimum distance, i.e., 
$\|U\|_0 := \min_{u\in U\setminus \{0\}} \|u\|_0$. 
Then we have 
\begin{observation}
\label{mult:dist}
    For any subspaces $U\subseteq \mathbb{F}_q^n$ and $V\subseteq \mathbb{F}_q^m$,
    $\|U\otimes V\|_0 = \|U\|_0 \cdot \|V\|_0$. 
\end{observation}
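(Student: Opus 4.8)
The plan is to prove the two inequalities $\|U\otimes V\|_0 \le \|U\|_0\cdot\|V\|_0$ and $\|U\otimes V\|_0 \ge \|U\|_0\cdot\|V\|_0$ separately.

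For the upper bound, I would exhibit an explicit low-weight nonzero element of $U\otimes V$. Pick $u\in U\setminus\{0\}$ and $v\in V\setminus\{0\}$ achieving the respective minimum distances, and consider the rank-one matrix $M = uv^T$. Every row of $M$ is a scalar multiple of $v$ and hence lies in $V$, and every column is a scalar multiple of $u$ and hence lies in $U$, so $M\in U\otimes V$; it is nonzero, and its support is exactly the Cartesian product of the supports of $u$ and $v$, so $\|M\|_0 = \|u\|_0\cdot\|v\|_0 = \|U\|_0\cdot\|V\|_0$.

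For the lower bound, I would take an arbitrary nonzero $M\in U\otimes V$ and count its nonzero entries column by column. Let $C$ be the set of indices of the nonzero columns of $M$. Each such column is a nonzero vector of $U$, hence has at least $\|U\|_0$ nonzero entries, giving $\|M\|_0 \ge |C|\cdot\|U\|_0$. It then remains to bound $|C|$ from below: since $M\ne 0$ it has a nonzero row, which is a nonzero vector of $V$ and hence has at least $\|V\|_0$ nonzero entries; every column meeting this row's support is nonzero, so $|C|\ge\|V\|_0$. Combining these two bounds yields $\|M\|_0 \ge \|U\|_0\cdot\|V\|_0$, and minimizing over all nonzero $M$ completes the argument.

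There is no serious obstacle here; the only point requiring a little care is that the lower-bound argument genuinely uses both halves of the definition of $U\otimes V$ — the column condition to lower-bound the weight of each nonzero column, and the row condition (applied to a single witnessing nonzero row) to lower-bound the number of nonzero columns. One could equally well run the symmetric version of this argument with the roles of rows and columns interchanged.
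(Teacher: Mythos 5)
Your proof is correct and is essentially the same as the paper's, just with the roles of rows and columns interchanged: the paper uses one nonzero column to lower-bound the number of nonzero rows and then sums row weights, while you use one nonzero row to lower-bound the number of nonzero columns and then sum column weights (a symmetry you yourself point out at the end).
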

\begin{proof}
    The LHS is at most the RHS since we may consider the element $uv^T\in U\otimes V$, 
    where we choose $u\in U$ that has sparsity $\|U\|_0$ and $v\in V$ that has sparsity $\|V\|_0$. 

    For the other direction, consider any nonzero $M\in U\otimes V$. 
    There must be some non-zero entry in $M$, and so there is at least one nonzero column. Since this column lies in $U$, it must have at least $\|U\|_0$ nonzero 
    entries, and therefore at least $\|U\|_0$ rows of $M$ are nonzero. Each such row lies in $V$ and hence has $\|V\|_0$ nonzero entries. We conclude that $M$ has at least $\|U\|_0\cdot \|V\|_0$ nonzero entries. 
\end{proof}
Applying the above observation inductively, we have:
\begin{observation}
\label{tensoring}
    For any subspaces $U\subset \mathbb{F}_q^n$ and any $t\in \mathbb{N}$,
    $\|U^{\otimes t}\|_0 = \|U\|_0^t $. 
\end{observation}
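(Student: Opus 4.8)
The plan is to prove \Cref{tensoring} by induction on $t$, using \Cref{mult:dist} at each step. The base case $t=1$ is immediate, since $U^{\otimes 1}=U$ and $\|U\|_0^1=\|U\|_0$.

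For the inductive step, suppose $t\geq 2$ and that $\|U^{\otimes(t-1)}\|_0=\|U\|_0^{t-1}$. The key structural point is that $U^{\otimes t}$ can be identified with $U^{\otimes(t-1)}\otimes U$: an element of $U^{\otimes t}$ is naturally a $t$-dimensional array indexed by $[n]^t$, and grouping the first $t-1$ indices into a single ``super-index'' in $[n]^{t-1}$ exhibits it as an $n^{t-1}\times n$ matrix whose columns lie in $U^{\otimes(t-1)}$ and whose rows lie in $U$. This identification is just a bijective relabeling of the $n^t$ coordinates, hence preserves Hamming weight, so $\|U^{\otimes t}\|_0=\|U^{\otimes(t-1)}\otimes U\|_0$. (If one instead takes the recursive definition $U^{\otimes t}:=U^{\otimes(t-1)}\otimes U$ as primitive, this paragraph is vacuous.) Applying \Cref{mult:dist} with the two subspaces $U^{\otimes(t-1)}\subseteq\mathbb{F}_q^{n^{t-1}}$ and $U\subseteq\mathbb{F}_q^{n}$, followed by the inductive hypothesis, gives
\[
\|U^{\otimes t}\|_0 \;=\; \|U^{\otimes(t-1)}\|_0\cdot\|U\|_0 \;=\; \|U\|_0^{t-1}\cdot\|U\|_0 \;=\; \|U\|_0^{t},
\]
completing the induction.

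The only genuine subtlety is the associativity/reshaping identification in the inductive step — making sure that ``the $t$-fold tensor power'' is unambiguous and that collapsing indices is a weight-preserving coordinate bijection; everything else is a one-line application of \Cref{mult:dist}. Since \Cref{mult:dist} is symmetric in its two arguments and the Hamming weight of an array does not depend on how we index its entries, no real obstacle arises, and the argument is as short as indicated above.
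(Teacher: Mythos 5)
Your proof is correct and matches the paper's approach: the paper simply notes that \Cref{tensoring} follows ``by applying \Cref{mult:dist} inductively,'' which is precisely your induction on $t$. Your extra care about the reshaping identification of $U^{\otimes t}$ with $U^{\otimes(t-1)}\otimes U$ is a reasonable pedantic point the paper leaves implicit.
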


\section{Rank testing via tensor codes}

The gap-producing randomized reduction in \cite{BL24} proceeds by first observing that testing (nonzero) satisfiability of a system of quadratic equations 
$Q_1(xx^T)=0,\dots , Q_m(xx^T)=0$ is 
equivalent to testing whether or not the subspace of matrices given by 
$\{X~:~X=X^T,~Q_1(X)=0,\dots , Q_m(X)=0\}$ contains a rank-1 matrix. 
Then \cite{BL24} shows how tensoring of a base code with appropriate structure 
can be used to test via Hamming weight if a matrix subspace contains a 
rank-1 matrix. 
We follow the same template, but our approach is even simpler as over finite fields, we are able to embed the above subspace directly inside a tensor code. We begin by discussing the simple but crucial properties of the Hamming weight of matrices of rank $> 1$ in a tensor code.

\paragraph{Hamming Weight of Rank $\geq 2$ Elements of a Tensor Code.}
Recall that in \Cref{mult:dist}, the upper bound $\|U\|_0^2$ on $\|U\otimes U\|_0$ (taking $V=U$) is attained by a rank-1 matrix. 
Implicit in \cite{AK14} is the insightful observation that codewords of rank $\geq 2$ in $U\otimes U$ have Hamming weight at least $(1+\frac{1}{q})\|U\|_0^2$ (over a finite field $\mathbb{F}_q$), which is significantly larger than the minimum distance. 
We include the proof here for completeness. 
\begin{lemma}
\label{rk2hw}
    For any subspace $U\subset \mathbb{F}_q^n$ and any $M\in U\otimes U$ of rank at least $2$, we have $\|M\|_0 \geq (1+\frac{1}{q})\cdot \|U\|_0^2$. 
\end{lemma}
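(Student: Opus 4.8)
The plan is to argue column-by-column, as in the proof of \Cref{mult:dist}, but exploiting the rank hypothesis to find extra nonzero columns. Let $M \in U \otimes U$ have rank $r \geq 2$. As observed in \Cref{mult:dist}, every nonzero column of $M$ lies in $U$ and hence has at least $\|U\|_0$ nonzero entries; let $S \subseteq [n]$ be the set of indices of nonzero columns, so $|S| \geq \|U\|_0$ and $\|M\|_0 \geq |S| \cdot \|U\|_0$ would already give $\|U\|_0^2$. To beat this, I would show $|S| \geq (1 + \tfrac1q)\|U\|_0$, which combined with the per-column bound yields the claim.

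The key point is that the $|S|$ nonzero columns of $M$, viewed as vectors in $U$, span a subspace of dimension $r \geq 2$. So among these columns we can find two that are linearly independent, say columns $c_i, c_j$. I would then argue about the subspace $W \subseteq U$ spanned by all the nonzero columns: since $\dim W \geq 2$, $W$ contains at least $q+1$ distinct one-dimensional subspaces, i.e. at least $q+1$ pairwise-nonproportional nonzero codewords up to scaling. Actually the cleaner route: consider the two independent columns $c_i$ and $c_j$. Their supports each have size $\geq \|U\|_0$; I want to lower bound $|S| \geq |\mathrm{supp}(c_i) \cup \mathrm{supp}(c_j)|$. If the two supports were identical, then for every scalar $\lambda \in \mathbb F_q$ the codeword $c_i - \lambda c_j \in U$ has support contained in that common set of size $\|U\|_0$; but $c_i - \lambda c_j \neq 0$ for all $\lambda$ and, since we have $q$ choices of $\lambda$ and only $\|U\|_0$ coordinates where these could be nonzero, some coordinate $k$ must satisfy $c_i[k] - \lambda c_j[k] = 0$ for two distinct values of $\lambda$, forcing $c_j[k] = 0$ — contradicting that $k \in \mathrm{supp}(c_j) = \mathrm{supp}(c_i)$... this needs more care.

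The robust version: for each coordinate $k \in \mathrm{supp}(c_i) \cap \mathrm{supp}(c_j)$ there is exactly one $\lambda_k \in \mathbb F_q^*$ with $c_i[k] = \lambda_k c_j[k]$. If $|\mathrm{supp}(c_i) \cap \mathrm{supp}(c_j)| > \|U\|_0 / q$ then by pigeonhole some value $\lambda$ is hit by more than $\|U\|_0/q \cdot \tfrac{1}{q-1}$... the right count is: the codeword $c_i - \lambda c_j$ is nonzero (independence) so has weight $\geq \|U\|_0$, and its support is contained in $\mathrm{supp}(c_i)\cup\mathrm{supp}(c_j)$ minus the coordinates where $c_i[k]=\lambda c_j[k]$. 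Summing the identity $\sum_{\lambda \in \mathbb F_q}\|c_i - \lambda c_j\|_0 \geq q\|U\|_0$ and noting that each coordinate $k$ contributes to $\|c_i-\lambda c_j\|_0$ for all but at most one value of $\lambda$ (and contributes only if $k \in \mathrm{supp}(c_i)\cup\mathrm{supp}(c_j)$), we get $\sum_\lambda \|c_i - \lambda c_j\|_0 \leq q \cdot |\mathrm{supp}(c_i)\cup\mathrm{supp}(c_j)| - |\mathrm{supp}(c_i) \cap \mathrm{supp}(c_j)|$, using that an intersection coordinate is killed by exactly one $\lambda$ and a symmetric-difference coordinate by none. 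Hence $|\mathrm{supp}(c_i) \cup \mathrm{supp}(c_j)| \geq \|U\|_0 + \tfrac1q |\mathrm{supp}(c_i)\cap\mathrm{supp}(c_j)| \geq \|U\|_0 + \tfrac1q\|U\|_0$ once we note $|\mathrm{supp}(c_i)\cap\mathrm{supp}(c_j)| \geq 2\|U\|_0 - |\mathrm{supp}(c_i)\cup\mathrm{supp}(c_j)|$ and solve; a short rearrangement gives $|S| \geq |\mathrm{supp}(c_i)\cup\mathrm{supp}(c_j)| \geq (1+\tfrac1q)\|U\|_0$. Multiplying by the per-column bound $\|U\|_0$ finishes the proof.

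The main obstacle is the clean bookkeeping in the averaging step — precisely accounting, for each coordinate $k$, for how many of the $q$ shifts $c_i - \lambda c_j$ are nonzero at $k$ (all $q$ if $k$ lies in exactly one of the supports, exactly $q-1$ if $k$ lies in both) — and then combining this with $\|c_i - \lambda c_j\|_0 \geq \|U\|_0$ for each of the $q$ nonzero shifts. Everything else is routine. I should double-check the edge case $q = 2$, where "exactly one $\lambda$" phrasing degenerates, to make sure the inequality $(1 + \tfrac12)\|U\|_0^2$ still comes out correctly.
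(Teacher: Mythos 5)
Your overall plan matches the paper's: find two linearly independent columns $c_i, c_j$ of $M$, lower-bound $|\mathrm{supp}(c_i) \cup \mathrm{supp}(c_j)|$ by $(1+\tfrac1q)\|U\|_0$, observe that every index in this union is a nonzero row of $M$ (hence a codeword of weight $\ge \|U\|_0$), and multiply. The paper isolates the joint-support bound as a stand-alone fact (Fact~2.7 of Austrin--Khot) and proves it by a similar averaging over $\lambda$. One small slip: you wrote $|S| \ge |\mathrm{supp}(c_i) \cup \mathrm{supp}(c_j)|$ with $S$ the set of nonzero \emph{columns}, but the union of column supports indexes \emph{rows}; you should redefine $S$ as the set of nonzero rows and count row by row, as the paper does.

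The real gap is in the bookkeeping of the averaging step, and as written the argument does not give the claimed constant. You assert that a ``symmetric-difference coordinate'' is killed by no $\lambda$. That is wrong for $k \in \mathrm{supp}(c_j) \setminus \mathrm{supp}(c_i)$: there $(c_i - \lambda c_j)[k] = -\lambda\, c_j[k]$, which vanishes at $\lambda = 0$. The exact count is $q$ for $k \in \mathrm{supp}(c_i)\setminus\mathrm{supp}(c_j)$ and $q-1$ for every $k \in \mathrm{supp}(c_j)$ (intersection or not), so $\sum_{\lambda \in \mathbb F_q} \|c_i - \lambda c_j\|_0 = q\,|\mathrm{supp}(c_i)\cup\mathrm{supp}(c_j)| - |\mathrm{supp}(c_j)|$. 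Your stated upper bound $qU_\cup - U_\cap$ is still valid (merely loose), but combined with $U_\cap \ge 2d - U_\cup$ it only yields $U_\cup \ge \tfrac{q+2}{q+1}\,d$, which is strictly smaller than the target $(1+\tfrac1q)d = \tfrac{q+1}{q}\,d$. The fix is short: use the exact count together with $|\mathrm{supp}(c_j)| \ge d$ to get $qd \le qU_\cup - d$, hence $U_\cup \ge (1+\tfrac1q)d$ directly, with no need for the inclusion--exclusion relation. Cleaner still: the $q+1$ vectors $\{c_j\} \cup \{c_i - \lambda c_j : \lambda \in \mathbb F_q\}$ hit every line in $\mathrm{span}(c_i, c_j)$ once, each coordinate of $U_\cup$ is killed by exactly one of them, so $(q+1)d \le q\,U_\cup$ immediately.
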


\begin{proof}
    Since $M$ is of rank at least two, there are two linearly independent 
    columns, that  by \Cref{lem:overlap} (proved below), have joint support of size at least $(1+\frac{1}{q}) d$. Thus at least $(1+\frac{1}{q}) d$ rows of $M$ are non-zero, and since they lie inside $U$, each of these rows has at least $\|U\|_0$ non-zero entries. Thus $\|M\|_0\ge (1+\frac{1}{q}) \|U\|_0^2$.
\end{proof}

\begin{fact}[Fact 2.7 in~\cite{AK14}]\label{lem:overlap}
        Let $U \subseteq \mathbb F_q^n$ be a subspace. Then for any two linearly independent vectors $x,y \in \mathbb F_q^n$, we have $|\text{supp}(x) \cup \text{supp}(y)|\ge (1+\frac{1}{q})\|U\|_0$.
    \end{fact}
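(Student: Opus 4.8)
The plan is to prove \Cref{lem:overlap} by a double-counting argument over the $q+1$ one-dimensional subspaces (i.e.\ the points of $\mathbb{P}^1(\mathbb F_q)$) inside the plane spanned by $x$ and $y$. First I would record the standing assumption that $x,y\in U$ (otherwise the claim is false, as standard basis vectors show); this is exactly how the fact is invoked in the proof of \Cref{rk2hw}, where $x,y$ are two columns of some $M\in U\otimes U$ and hence lie in $U$ by the very definition of the tensor product. Write $d:=\|U\|_0$ and $S:=\mathrm{supp}(x)\cup\mathrm{supp}(y)$.

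Next I would consider the family of $q+1$ vectors
\[
\mathcal{F}:=\{\,y\,\}\ \cup\ \{\,x+\lambda y\;:\;\lambda\in\mathbb F_q\,\},
\]
one representative from each one-dimensional subspace of $\langle x,y\rangle$. Since $x$ and $y$ are linearly independent, no member of $\mathcal{F}$ is the zero vector, so each is a nonzero codeword of $U$ and therefore has Hamming weight at least $d$; summing over the family gives $\sum_{z\in\mathcal{F}}\|z\|_0\ge(q+1)\,d$. On the other hand, every vector in $\mathcal{F}$ is supported inside $S$, so this same sum equals $\sum_{i\in S}N_i$, where $N_i$ is the number of $z\in\mathcal{F}$ with $z_i\neq 0$.

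The heart of the argument is the claim that $N_i=q$ for every $i\in S$, which is a short case analysis on the coordinate $i$. If $y_i=0$ then necessarily $x_i\neq 0$ (as $i\in S$), so $x+\lambda y$ is nonzero in coordinate $i$ for all $q$ values of $\lambda$ while $y$ is zero there, giving $N_i=q$. If $y_i\neq 0$ then $y$ contributes, and $x+\lambda y$ vanishes in coordinate $i$ for exactly one value, $\lambda=-x_i/y_i$, and is nonzero for the remaining $q-1$ values, again giving $N_i=q$. Combining the two estimates, $(q+1)\,d\le\sum_{i\in S}q=q\,|S|$, which rearranges to $|S|\ge\bigl(1+\tfrac1q\bigr)\,d$, as desired.

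There is no genuinely hard step here: the only points requiring care are that every member of $\mathcal{F}$ really is a \emph{nonzero} codeword (this is precisely where linear independence is used) and the per-coordinate identity $N_i=q$. It is also worth flagging that the hypothesis of \Cref{lem:overlap} should be read as $x,y\in U$ rather than $x,y\in\mathbb F_q^n$.
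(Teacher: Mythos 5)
Your proof is correct, and it takes a genuinely different route from the paper's. The paper sets $m=|\mathrm{supp}(x)\triangle\mathrm{supp}(y)|$ and $m'=|\mathrm{supp}(x)\cap\mathrm{supp}(y)|$, uses $\|x\|_0+\|y\|_0=m+2m'\ge 2d$, and then applies a pigeonhole argument to find a \emph{single} scalar $\lambda\neq 0$ such that $x-\lambda y$ has at most $m+m'-\frac{m'}{q-1}$ nonzero entries; the two inequalities are then combined with weights $\frac1q$ and $\frac{q-1}{q}$. You instead perform a clean double count over all $q+1$ lines of the plane $\langle x,y\rangle$: every member of your family $\mathcal F$ is a nonzero codeword of weight $\ge d$, while each coordinate of $S$ is hit by exactly $q$ of the $q+1$ representatives. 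Summing both ways gives $(q+1)d\le q|S|$ directly. The paper's argument picks a best $\lambda$ (a minimum), whereas yours averages over the entire projective line; the bound is identical, but your version avoids the ad hoc choice of combining coefficients $\frac1q,\frac{q-1}{q}$ and makes the role of $q+1$ transparent. Your observation that the statement should be read with $x,y\in U$ (rather than arbitrary $x,y\in\mathbb F_q^n$, which would make the claim false) is also correct; that is indeed how the fact is used in the proof of \Cref{rk2hw}.
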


    \begin{proof}
        Let $m$ be the number of coordinates such that $x_i \neq 0$ or $y_i \neq 0$ but not both, and let $m'$ be the number of coordinates such that $x_i \neq 0$ and $y_i \neq 0$. Clearly,
        $$m+2m' \ge 2d.$$
        Since there are only $q-1$ choices of values for $x_i/y_i$ (for $x_i,y_i\neq 0$), there must exist $\lambda \neq 0$ 
        so that the vector $x -\lambda y$ has at most $m+m'-\frac{m'}{q-1}$ non-zero entries. This implies
        $$m+m'-\frac{m'}{q-1}\ge d.$$
        Multiplying the first inequality by $\frac{1}{q}$, the second by $\frac{q-1}{q}$, and adding, gives $m+m' \ge (1+\frac{1}{q})d$ as desired.
    \end{proof}

\paragraph{$\varepsilon$-Balanced Codes.}
At the heart of our reduction that generates constant factor hardness for MDP is the observation that the above connection 
between rank and Hamming weight can be made two-sided, assuming the base 
code being tensored is $\varepsilon$-balanced. More specifically, we observe that any codeword in $C\otimes C$ for an $\varepsilon$-balanced code $C$ has 
near-minimum Hamming weight if and only if it is rank-1. 
\begin{definition}\label{def:eps-balanced}
    For any constant $\varepsilon>0$, we say a linear error-correcting code with encoding map $C:\mathbb F_q^n \to \mathbb F_q^N$ with distance $d$ is $\varepsilon$-balanced\footnote{
    The usual definition of $\varepsilon$-balanced is for binary linear codes and has 
    the additional requirement that the minimum distance is $N/2(1-\Theta(\varepsilon))$. 
    For our purposes, the minimum distance is unconstrained. We abuse terminology 
    and continue to use the term $\varepsilon$-balanced. 
    We also use this terminology for larger fields. 
    } 
    if the Hamming weight of every nonzero codeword lies in the range $[d,(1+\varepsilon)d]$.
\end{definition}
We remark that $\varepsilon$-balanced codes of not-too-small dimension satisfy a weak version of local density, namely a Hamming ball of radius $(1+\varepsilon)d$ contains exponentially many (in fact all) codewords. 

It turns out that 
$\varepsilon$-balanced codes can be easily constructed by concatenating a Reed-Solomon code with the Hadamard code \cite{AGHP92}. Specifically, we have the following lemma:
\begin{lemma} \label{lem:eps-biased}
    For any constant $\varepsilon>0$, any finite field $\mathbb F_q$, and any $n \in \mathbb N$, there exists $N \le (qn/\varepsilon)^2$ and a linear code $C:\mathbb F_q^n \to \mathbb F_q^{N}$ with minimum distance at least $d = (1-\varepsilon)(1-\frac{1}{q})N$, satisfying
    $$\|C(x)\|_0 \in \left[(1-\varepsilon)\left(1-\frac{1}{q}\right)N,\left(1-\frac{1}{q}\right)N\right], \forall x \in \mathbb F_q^n \setminus \{0\} .$$
\end{lemma}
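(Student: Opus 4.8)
The plan is to realize $C$ as the concatenation of a Reed--Solomon outer code with a Hadamard inner code; once this is set up, the required two-sided weight bound can be read off almost directly from the two components. We may assume $\varepsilon<1$ (the statement for smaller $\varepsilon$ is stronger, so this is harmless), and since $n\ge 1$ we have $n/\varepsilon>1$.

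First I would fix parameters. Set $k:=\lceil \log_q(n/\varepsilon)\rceil$, so that $k\ge 1$ and $n/\varepsilon\le Q<qn/\varepsilon$ where $Q:=q^k$; also set $t:=\lceil n/k\rceil$. The outer code is the Reed--Solomon code $\mathrm{RS}$ over $\mathbb F_Q$ consisting of the evaluation vectors $(f(\alpha))_{\alpha\in\mathbb F_Q}$ of polynomials $f\in\mathbb F_Q[z]$ with $\deg f<t$: it has block length $Q$, $\mathbb F_Q$-dimension $t$ (hence $\mathbb F_q$-dimension $tk\ge n$, using $t\le Q$), and every nonzero codeword has at most $t-1$ zero coordinates. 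The inner code is the Hadamard code $\mathrm{Had}:\mathbb F_q^k\to\mathbb F_q^{q^k}$, $\mathrm{Had}(a)=(\langle a,x\rangle)_{x\in\mathbb F_q^k}$, composed with a fixed $\mathbb F_q$-linear identification $\mathbb F_Q\cong\mathbb F_q^k$. Concatenating gives an injective $\mathbb F_q$-linear map from the space $\mathbb F_q^{tk}$ of $\mathrm{RS}$-messages into $\mathbb F_q^N$ with $N=Q\cdot q^k=Q^2<(qn/\varepsilon)^2$; I then take $C$ to be the restriction of this map to any $n$-dimensional subspace of the message space.

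The weight analysis is the crux and is short. A nonzero linear functional on $\mathbb F_q^k$ attains each value in $\mathbb F_q$ exactly $q^{k-1}$ times, so every nonzero codeword of $\mathrm{Had}$ has Hamming weight exactly $(1-\tfrac1q)q^k=(1-\tfrac1q)Q$, while the zero symbol maps to the zero block. Therefore, if a codeword $C(x)$ comes from a nonzero polynomial $f$ of degree $<t$, its weight equals $w_f\cdot(1-\tfrac1q)Q$, where $w_f=|\{\alpha\in\mathbb F_Q:f(\alpha)\ne0\}|$ satisfies $Q-t+1\le w_f\le Q$. Hence for every nonzero $x$,
$\|C(x)\|_0\in\big[(Q-t+1)(1-\tfrac1q)Q,\ (1-\tfrac1q)Q^2\big]=\big[(1-\tfrac{t-1}{Q})(1-\tfrac1q)N,\ (1-\tfrac1q)N\big]$.

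Finally I would check that this interval is contained in the claimed one, i.e.\ that $(t-1)/Q\le\varepsilon$; this is exactly the inequality the choices of $k$ and $t$ are calibrated to give. Indeed $\varepsilon Q\ge \varepsilon\cdot(n/\varepsilon)=n\ge n/k\ge t-1$. This yields $\|C(x)\|_0\in[(1-\varepsilon)(1-\tfrac1q)N,(1-\tfrac1q)N]$ for all nonzero $x$, and in particular the minimum distance is at least $d=(1-\varepsilon)(1-\tfrac1q)N$, completing the proof. I do not expect a genuine obstacle here: the only things requiring care are the integer rounding in the choice of $k,t$ (and verifying $1\le t\le Q$) and recognizing that the perfectly balanced weight of $\mathrm{Had}$ combines \emph{multiplicatively} with the Reed--Solomon weight spectrum to produce a \emph{two-sided} bound relative to $N$, which is what makes the code $\varepsilon$-balanced in the sense of \Cref{def:eps-balanced}.
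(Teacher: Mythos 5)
Your proof is correct and takes essentially the same approach as the paper: both realize $C$ as a Reed--Solomon code concatenated with the Hadamard code, and both exploit the fact that every nonzero Hadamard codeword has weight exactly $(1-\tfrac1q)Q$ so that the RS weight spectrum translates multiplicatively into a two-sided bound. The only cosmetic difference is in obtaining an $n$-dimensional $\mathbb F_q$-linear message space: the paper evaluates degree-$<n$ polynomials with coefficients restricted to $\mathbb F_q$ at all points of $\mathbb F_Q$, whereas you take the standard RS code over $\mathbb F_Q$ of $\mathbb F_q$-dimension $tk\ge n$ and restrict to an $n$-dimensional $\mathbb F_q$-subspace --- both are valid and yield the same parameters.
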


Note that when $\varepsilon<\frac{1}{2}$, we have $\left[(1-\varepsilon)\left(1-\frac{1}{q}\right)N,\left(1-\frac{1}{q}\right)N\right] \subseteq [d,(1+2\varepsilon) d]$, which means the code is $(2\varepsilon)$-balanced.

\begin{proof}[Proof of \Cref{lem:eps-biased}]
Pick $m$ to be the smallest integer so that $n \le \varepsilon q^m$. Let $Q = q^m$. Note that $Q \le q n/\varepsilon$.
   Let $\text{RS} : \mathbb F_q^n \to \mathbb F_Q^Q$ be a Reed-Solomon encoding map that maps polynomials of degree $< n$ over $\mathbb F_q$ to their evaluations at all points in the extension field $\mathbb F_Q$. Now concatenate this encoding with the Hadamard encoding that maps $\mathbb F_Q$, viewed as vectors in $\mathbb F_q^m$ under some canonical basis, to $\mathbb F_q^{q^m}$. The resulting concatenated code has block length $N=Q \cdot q^m = Q^2  \le ( q n/\varepsilon)^2$. 
   
   The distance of the concatenated code is at least $(1-\varepsilon)(1-\frac{1}{q}) N$, since the Reed-Solomon code has distance greater than $Q - n \ge (1-\varepsilon) Q$ and the Hadamard code has distance $(1-\frac{1}{q})Q$. The lower bound on the weight of every nonzero codeword follows from distance, while the upper bound comes from the fact that each of the $Q$ symbols in Reed-Solomon code contributes at most $(1-\frac{1}{q})Q$ Hamming weight after encoding by the Hadamard code. 
\end{proof}

\paragraph{Rank-1 Testing via Hamming Weight.}
Equipped with such a code, we obtain the key observation 
regarding rank-1 testing that underlies our MDP reduction:
\begin{lemma}[Rank-1 Testing via Hamming Weight of a Linear Transformation]~\\
\label{rk1test}
    Let $C\in \mathbb{F}_q^{N\times n}$ be a generator matrix of an $\varepsilon$-balanced code of minimum distance $d \ge 1$. 
    Let $X\in \mathbb{F}_q^{n\times n}$. Then 
    \begin{enumerate}
        \item[-] if $X$ is rank one,  $\|CXC^T\|_0\leq (1+\varepsilon)^2\cdot d^2$ 

        \item[-] if $X$ has rank at least two, $\|CXC^T\|_0\geq (1+\frac{1}{q})\cdot d^2$. 
    \end{enumerate}
\end{lemma}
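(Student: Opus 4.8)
The plan is to analyze the Hamming weight of $CXC^T$ by relating it to codewords of the tensor code $C \otimes C$, exploiting that $C$ is $\varepsilon$-balanced. First observe that $CXC^T \in C \otimes C$ (viewing $C$ as the generator matrix, every column of $CXC^T$ is $C$ times a column of $XC^T$, hence lies in the column space of $C$, and symmetrically for rows), and that $\operatorname{rank}(CXC^T) = \operatorname{rank}(X)$ when $C$ has full column rank (which it does, being the generator of a code of dimension $n$). So the two cases of the lemma correspond exactly to rank-1 versus rank-$\geq 2$ elements of $C \otimes C$.

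For the first bullet, write $X = uv^T$ for vectors $u, v \in \mathbb{F}_q^n$ (nonzero, else the bound is trivial with weight $0$). Then $CXC^T = (Cu)(Cv)^T$, which is the outer product of two codewords of $C$. Its support is exactly $\operatorname{supp}(Cu) \times \operatorname{supp}(Cv)$, so $\|CXC^T\|_0 = \|Cu\|_0 \cdot \|Cv\|_0$. Since $C$ is $\varepsilon$-balanced, each factor is at most $(1+\varepsilon)d$, giving $\|CXC^T\|_0 \leq (1+\varepsilon)^2 d^2$.

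For the second bullet, I would invoke \Cref{rk2hw} almost directly: $CXC^T$ is an element of $U \otimes U$ where $U$ is the code generated by $C$, it has rank at least two (since $X$ does and $C$ has full column rank), $U$ has minimum distance $d$, so $\|CXC^T\|_0 \geq (1 + \frac{1}{q})\|U\|_0^2 = (1+\frac{1}{q})d^2$. One should double-check that \Cref{rk2hw}'s statement is phrased for elements of $U \otimes U$ with $U$ given abstractly as a subspace, which it is, so applying it to $U = \operatorname{colspace}(C)$ is immediate.

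The only genuinely non-routine point is justifying $\operatorname{rank}(CXC^T) = \operatorname{rank}(X)$, i.e., that multiplying on both sides by the injective map $C$ (full column rank since it generates a dimension-$n$ code) preserves rank — this is standard linear algebra but is the lynchpin that makes the rank hypothesis on $X$ transfer to $CXC^T$. Beyond that, the proof is a two-line case analysis: the upper bound is the outer-product support computation, and the lower bound is a citation of \Cref{rk2hw}. I do not anticipate a real obstacle; the work was front-loaded into establishing \Cref{rk2hw} and constructing the $\varepsilon$-balanced code, both already in hand.
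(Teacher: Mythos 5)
Your proof is correct and follows essentially the same approach as the paper: for the rank-one case, write $CXC^T = (Cu)(Cv)^T$ and use $\varepsilon$-balancedness on each factor, and for the rank-$\geq 2$ case, note $C$'s full column rank preserves the rank, then invoke \Cref{rk2hw}. The paper justifies full column rank of $C$ by positive minimum distance rather than by dimension, but this is an equivalent observation.
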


\begin{proof}
    The first claim follows  
    since if $X$ is rank one, $CXC^T=(Cu)(Cv)^T$ for some $u,v\in \mathbb{F}_q^n$. Then by the definition of $\varepsilon$-balanced, 
    $\|CXC^T\|_0 = \|Cu\|_0\cdot \|Cv\|_0\leq (1+\varepsilon)^2\cdot d^2$. 

    For the second claim, note that if $X$ has rank at least 2, then $CXC^T$ has rank at least 2 since $C$, being the generator matrix of a code of positive distance, has full column rank. Furthermore the rows and columns of $CXC^T$ all lie in the 
    code whose generator matrix is $C$. The second claim then follows from \Cref{rk2hw}. 
\end{proof}

\section{Inapproximability of MDP}
\label{sec:mdp}

We now prove \Cref{thm:mdp} by presenting a gap-producing reduction from homogeneous quadratic equations to MDP.

\subsection{Reduction }
\label{subsec:mdp_red}
\paragraph{Input:} A parameter $\varepsilon>0$ and a system of homogeneous quadratic equations of the form 
\begin{align}
\label{hq}
Q_1(xx^T)=0,\dots , Q_m(xx^T)=0
\end{align}

\paragraph{Output Subspace:}
Let $\varepsilon = \frac{1}{9q}$, and let $C\in \mathbb{F}_q^{N\times n}$ be the generator matrix of an $\varepsilon$-balanced code of minimum distance $d$. Our output subspace 
is defined as 
\begin{equation}
V:= \{CXC^T :~ Q_1(X)=0,\dots , Q_m(X)=0,~
X^T=X,~X\in\mathbb{F}_q^{n\times n}\}
\label{hred}
\end{equation}

Using the construction of an $\varepsilon$-balanced code from \Cref{lem:eps-biased}, we reduce an instance of homogeneous quadratic equations with $n$ variables to an MDP instance with $N^2=\text{poly}(n,\frac{1}{\varepsilon})$ variables. A basis of $V$ can be computed in polynomial time by considering 
the basis $\{CXC^T:X\in B\}$, where $B$ is a basis of 
$\{X:Q_1(X)=0,\ldots ,Q_m(X)=0,~X^T=X\}$.

\subsection{Analysis}
\paragraph{Completeness.} Let $x \in \mathbb \{0,1\}^{n}$ be a non-zero solution to the system \Cref{hq}. Then $(Cx)(Cx)^T\in V\setminus\{0\}$ and satisfies $\|(Cx)(Cx)^T\|_0 \le (1+\varepsilon)^2 d^2 \leq (1+\frac{1}{3q})d^2$. 

\paragraph{Soundness.} Suppose there is no non-zero solution to system \Cref{hq}, we argue that 
$\|V\|_0 \ge (1+\frac{1}{q}) d^2$. Consider any non-zero $Y\in V$, and let 
$X$ be such that $Y=CXC^T$. 

If $X$ has rank at least 2, by \Cref{rk1test}, $\|Y\|_0\ge (1+\frac{1}{q}) d^2$. So it remains to consider the case where $X$ has rank 1. Since $X$ is symmetric, we conclude $X = xx^T$ for some non-zero $x\in \mathbb F_q^n$, which implies that for every $\ell \in [m]$, $Q_\ell(xx^T)=0$,
and thus $x$ is a solution to the system \Cref{hq}, contradicting our assumption.

This yields NP-Hardness of approximating MDP within a factor of 
$(1+\frac{1}{q})/(1+\frac{1}{3q})= 1+\frac{2}{3q+1}$. 
By simple tensoring, one can increase the gap to any constant, with only a polynomial blow-up on the instance size, and to almost-polynomial gap with a quasi polynomial blow-up in instance size. This completes the proof of \Cref{thm:mdp}.

\section{Inapproximability of NCP}
\label{sec:ncp}

In this section we will prove \Cref{thm:ncp} using a straightforward gap-preserving reduction from MDP. Towards this end, we fist note that the MDP 
reduction shown above can be easily modified to have the distinguished-coordinate property. We defer the proof to the appendix. 

\begin{proposition}[MDP Hardness with a Distinguished Coordinate]~\\
\label{dist:mdp}
For any finite field $\mathbb F_q$ and any constant $c>1$ (resp. for any constant $\varepsilon>0$), it is NP-hard (resp. Quasi-NP-Hard) 
given an input subspace $V\subseteq \mathbb{F}_q^N$ to distinguish between the following cases: 
    \begin{itemize}
        \item (YES) There exists $x \in V\setminus \{0\}$ satisfying $\|x\|_0 \le k,~x_N=1$;
        \item (NO) For all $x \in V\setminus \{0\}$, $\|x\|_0 > c \cdot k$ (resp. $\|x\|_0 > 2^{\log^{1-\varepsilon}N} \cdot k$).
    \end{itemize}
\end{proposition}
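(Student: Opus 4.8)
The plan is to show that the MDP instance produced by the reduction in \Cref{subsec:mdp_red} already carries a distinguished coordinate, essentially for free, by exploiting the distinguished-coordinate property of \Cref{prop:quadeq} together with the structure of the code $C$. Recall that in the YES case of \Cref{prop:quadeq} there is a nonzero $x\in\{0,1\}^n$ satisfying all equations with $x_n=1$. The corresponding near-minimum-weight codeword of the MDP instance is $Y=(Cx)(Cx)^T$. The idea is to pin down a single coordinate $(a,b)$ of the $N\times N$ matrix ambient space such that $Y[a,b]$ is forced to be a fixed nonzero constant (which can then be scaled to $1$), while no nonzero codeword in the NO case is constrained at that coordinate. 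A clean way to arrange this: modify the base code $C$ so that one fixed output coordinate, say coordinate $N$, computes exactly the input symbol $x_n$ — i.e., append a coordinate to $C$ that reads off $x_n$. Then $(Cx)[N]=x_n=1$ in the YES case, so $Y[N,N]=(Cx)[N]^2=1$. Appending one coordinate changes neither the $\varepsilon$-balancedness bounds by more than a negligible amount (one extra coordinate against block length $N=\Theta((qn/\varepsilon)^2)$) nor the distance, so \Cref{rk1test}, the completeness bound, and the soundness bound all go through verbatim with $k=(1+\tfrac{1}{3q})d^2$.

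The main steps, in order, are: (i) replace $C$ from \Cref{lem:eps-biased} with $C'$ obtained by appending to $C$ a copy of the coordinate functional $x\mapsto x_n$ repeated $t$ times for a suitable $t=\Theta(\varepsilon N)$, so that the weight contribution of the new block is comparable to the old one and $C'$ remains $O(\varepsilon)$-balanced with essentially the same relative distance (this preserves all quantitative claims of \Cref{sec:mdp}); (ii) observe that in the YES case the codeword $Y=(C'x)(C'x)^T$ satisfies $Y[N',N']=1$ where $N'$ is the last coordinate of $C'$, and $\|Y\|_0\le(1+\varepsilon)^2 d'^2\le(1+\tfrac1{3q})d'^2$; (iii) in the NO case there is simply no nonzero codeword of weight below $(1+\tfrac1q)d'^2$ at all (by the soundness analysis of \Cref{sec:mdp}, unchanged), so a fortiori no light codeword with a prescribed value at coordinate $N'$; (iv) amplify by tensoring exactly as in \Cref{thm:mdp} — here one must check the distinguished coordinate survives tensoring, which it does since $(Y^{\otimes s})[(N',\dots,N'),(N',\dots,N')]=Y[N',N']^s=1$, and the weight gap becomes any constant (resp.\ almost-polynomial) with polynomial (resp.\ quasipolynomial) blow-up. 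Setting $k$ to be the completeness bound of the tensored instance then gives exactly the YES/NO dichotomy in the statement, after rescaling the distinguished codeword so the flagged coordinate equals $1$ (it already does).

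An alternative, if one prefers not to alter $C$: note that since the YES solution lies in $\{0,1\}^n$ and $x_n=1$, one can instead add the affine-type constraint at the level of the quadratic system — but MDP is homogeneous, so this is exactly why the distinguished coordinate must be realized inside the codeword via a coordinate of $C$ rather than via a constant. Hence the ``append a read-off coordinate'' route is the natural one.

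I expect the only real obstacle to be bookkeeping: verifying that appending $\Theta(\varepsilon N)$ copies of the $x_n$-functional keeps the code $O(\varepsilon)$-balanced (the new block adds between $0$ and $t$ to the weight depending on whether $x_n=0$, which could in principle push a codeword's relative weight outside the balanced window) and choosing $\varepsilon$ and $t$ so that, after the slight loss, the completeness bound $(1+\varepsilon')^2 d'^2$ still stays strictly below the soundness bound $(1+\tfrac1q)d'^2$. This is purely a matter of picking $\varepsilon=\Theta(1/q)$ small enough, exactly as in \Cref{subsec:mdp_red}; there is no conceptual difficulty, only the need to redo the constant-chasing of the completeness/soundness gap with the augmented code. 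Everything else — the tensoring amplification and the preservation of the flagged coordinate under $\otimes$ — is immediate from \Cref{tensoring} and \Cref{rk2hw}.
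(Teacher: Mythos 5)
Your construction is correct but takes a genuinely different route than the paper's, and is somewhat more cumbersome. The paper does not touch the code $C$ at all; instead it appends a \emph{single} extra coordinate directly to the output tuple, defining
\[
V:= \bigl\{(CXC^T,\,X_{n,n}) :~ Q_1(X)=0,\dots , Q_m(X)=0,~ X^T=X,~X\in\mathbb{F}_q^{n\times n}\bigr\} \subseteq \mathbb{F}_q^{N^2+1}.
\]
Since $X_{n,n}=x_n^2=1$ when $x\in\{0,1\}^n$ with $x_n=1$, completeness gives weight at most $(1+\tfrac1{3q})d^2+1$; soundness is literally unchanged because nonzero $(CXC^T,X_{n,n})$ forces nonzero $X$. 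The ``$+1$'' is negligible against $d^2$, so no rebalancing, no change to $\varepsilon$, and no re-verification of \Cref{lem:eps-biased} is needed. By contrast, you embed the flag inside the code by appending $t=\Theta(\varepsilon N)$ read-off rows $e_n^T$ to $C$ and then have to re-chase the $\varepsilon$-balancedness constants of the augmented $C'$ (and deal with the extra $\Theta(tN)$ entries in the new rows and columns of $C'XC'^T$). This all works out with care, and your observation that $t=1$ already suffices up to an $O(1/N)$ perturbation would simplify your route further, but it is strictly more bookkeeping than the paper's one-scalar append. Your check that the distinguished coordinate survives tensoring via $(Y^{\otimes s})[(N',\dots,N'),(N',\dots,N')]=Y[N',N']^s=1$ matches the paper's remark and is correct. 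In short: same key ideas (distinguished coordinate from \Cref{prop:quadeq}, tensoring amplification), but the paper's realization of the flag as a separate appended scalar is the cleaner choice since it leaves \Cref{lem:eps-biased}, \Cref{rk1test}, and the gap constants entirely untouched.
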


\begin{proof}[Proof of \Cref{thm:ncp}]
    Consider the reduction from MDP (with a distinguished coordinate) 
    to NCP, given by mapping a subspace $V \subseteq \mathbb F_q^{n}$ to the affine subspace $V':=\{x\in V:x_n=1\}$. 
    The claim then follows from \Cref{dist:mdp}. 

    Completeness follows from the distinguished-coordinate property of MDP.

    For soundness, note that any $x\in V'$ is a non-zero vector in $V$,  with the same sparsity. 
\end{proof}

\begin{remark}
A gap-preserving Cook-reduction from MDP to NCP was known in the literature (see e.g., \cite{GMSS99} and Exercise 23.13 in \cite{GRS18}). However, we have a Karp-reduction here thanks to the distinguished-coordinate property.

We also remark that there are non-trivial ways to amplify gap for NCP (see e.g., Theorem 22 of \cite{LLL24} and Section 4.2 of \cite{BGKM18}). Thus one can start with the NP-hardness of non-homogeneous quadratic equations, perform the reduction in \Cref{sec:mdp} to get a mild constant gap for NCP, and then amplify the gap. This is another way to prove \Cref{thm:ncp}.
\end{remark}

\begin{remark}    
    The relatively near codeword problem parameterized by $\rho\in (0,\infty)$ (denoted by $\text{RNC}^{\,(\rho)}$) is a promise problem 
    defined as follows. Given a subspace $V\subseteq \mathbb{F}_q^n$ of an (unknown) minimum distance $d$, a vector $b\in \mathbb{F}_q^n$ and an integer $t$ with the promise that $t<\rho\cdot d$, the task is to find a codeword in $V$ of hamming distance at most $t$ from $b$. 

    Our reduction for NCP yields NP-Hardness of $\text{RNC}^{\,(1+\varepsilon)}$ 
    for any fixed $\varepsilon>0$ (it also yields hardness for the gap version of RNC for an appropriately small constant gap). \cite{DMS93} show 
    hardness of $\text{RNC}^{\,(\rho)}$ for any $\rho>1/2$, albeit under randomized reductions. 
\end{remark}

\section*{Acknowledgement}
\noindent Xuandi would like to thank Yican Sun for helpful comments and discussions.

\bibliographystyle{alpha}
\bibliography{main}

\newcommand{\etalchar}[1]{$^{#1}$}
\begin{thebibliography}{BGKM18}

\bibitem[AB09]{AB09}
Sanjeev Arora and Boaz Barak.
\newblock {\em Computational Complexity - {A} Modern Approach}.
\newblock Cambridge University Press, 2009.

\bibitem[ABSS97]{ABSS97}
Sanjeev Arora, L{\'{a}}szl{\'{o}} Babai, Jacques Stern, and Z.~Sweedyk.
\newblock The hardness of approximate optima in lattices, codes, and systems of linear equations.
\newblock {\em J. Comput. Syst. Sci.}, 54(2):317--331, 1997.

\bibitem[AGHP92]{AGHP92}
Noga Alon, Oded Goldreich, Johan H{\aa}stad, and Ren{\'{e}} Peralta.
\newblock Simple construction of almost k-wise independent random variables.
\newblock {\em Random Struct. Algorithms}, 3(3):289--304, 1992.

\bibitem[Ajt98]{Ajtai98}
Mikl{\'o}s Ajtai.
\newblock The shortest vector problem in l2 is np-hard for randomized reductions.
\newblock In {\em Proceedings of the thirtieth annual ACM symposium on Theory of computing}, pages 10--19, 1998.

\bibitem[AK14]{AK14}
Per Austrin and Subhash Khot.
\newblock A simple deterministic reduction for the gap minimum distance of code problem.
\newblock {\em {IEEE} Trans. Inf. Theory}, 60(10):6636--6645, 2014.

\bibitem[BGKM18]{BGKM18}
Arnab Bhattacharyya, Suprovat Ghoshal, {Karthik {C. S.}}, and Pasin Manurangsi.
\newblock Parameterized intractability of even set and shortest vector problem from gap-eth.
\newblock In Ioannis Chatzigiannakis, Christos Kaklamanis, D{\'{a}}niel Marx, and Donald Sannella, editors, {\em 45th International Colloquium on Automata, Languages, and Programming, {ICALP} 2018, July 9-13, 2018, Prague, Czech Republic}, volume 107 of {\em LIPIcs}, pages 17:1--17:15. Schloss Dagstuhl - Leibniz-Zentrum f{\"{u}}r Informatik, 2018.

\bibitem[BHI{\etalchar{+}}24]{BHIR24}
Nir Bitansky, Prahladh Harsha, Yuval Ishai, Ron~D. Rothblum, and David~J. Wu.
\newblock Dot-product proofs and their applications.
\newblock In {\em 65th {IEEE} Annual Symposium on Foundations of Computer Science, {FOCS} 2024, Chicago, IL, USA, October 27-30, 2024}, pages 806--825. {IEEE}, 2024.

\bibitem[BL24]{BL24}
Vijay Bhattiprolu and Euiwoong Lee.
\newblock Inapproximability of sparsest vector in a real subspace.
\newblock {\em CoRR}, abs/2410.02636, 2024.

\bibitem[BMvT78]{BMvT78}
Elwyn~R. Berlekamp, Robert~J. McEliece, and Henk C.~A. van Tilborg.
\newblock On the inherent intractability of certain coding problems (corresp.).
\newblock {\em {IEEE} Trans. Inf. Theory}, 24(3):384--386, 1978.

\bibitem[CW12]{CW12}
Qi~Cheng and Daqing Wan.
\newblock A deterministic reduction for the gap minimum distance problem.
\newblock {\em {IEEE} Trans. Inf. Theory}, 58(11):6935--6941, 2012.

\bibitem[DMS03]{DMS93}
Ilya Dumer, Daniele Micciancio, and Madhu Sudan.
\newblock Hardness of approximating the minimum distance of a linear code.
\newblock {\em {IEEE} Trans. Inf. Theory}, 49(1):22--37, 2003.

\bibitem[GMSS99]{GMSS99}
Oded Goldreich, Daniele Micciancio, Shmuel Safra, and Jean{-}Pierre Seifert.
\newblock Approximating shortest lattice vectors is not harder than approximating closest lattice vectors.
\newblock {\em Inf. Process. Lett.}, 71(2):55--61, 1999.

\bibitem[GRS18]{GRS18}
Venkatesan Guruswami, Atri Rudra, and Madhu Sudan.
\newblock {\em Essential Coding Theory}.
\newblock Draft available online, 2018.

\bibitem[LLL24]{LLL24}
Shuangle Li, Bingkai Lin, and Yuwei Liu.
\newblock Improved lower bounds for approximating parameterized nearest codeword and related problems under {ETH}.
\newblock In Karl Bringmann, Martin Grohe, Gabriele Puppis, and Ola Svensson, editors, {\em 51st International Colloquium on Automata, Languages, and Programming, {ICALP} 2024, July 8-12, 2024, Tallinn, Estonia}, volume 297 of {\em LIPIcs}, pages 107:1--107:20. Schloss Dagstuhl - Leibniz-Zentrum f{\"{u}}r Informatik, 2024.

\bibitem[Mic01]{Micciancio01}
Daniele Micciancio.
\newblock The shortest vector in a lattice is hard to approximate to within some constant.
\newblock {\em SIAM journal on Computing}, 30(6):2008--2035, 2001.

\bibitem[Mic14]{M14}
Daniele Micciancio.
\newblock Locally dense codes.
\newblock In {\em 2014 IEEE 29th Conference on Computational Complexity (CCC)}, pages 90--97. IEEE, 2014.

\bibitem[Var97]{Var97}
Alexander Vardy.
\newblock The intractability of computing the minimum distance of a code.
\newblock {\em {IEEE} Trans. Inf. Theory}, 43(6):1757--1766, 1997.

\bibitem[Vio09]{Viola09}
Emanuele Viola.
\newblock The sum of d small-bias generators fools polynomials of degree d.
\newblock {\em Computational Complexity}, 18(2):209--217, 2009.

\end{thebibliography}

\appendix

\section{Proof of \Cref{prop:quadeq}}
\label{sec:quadeq}

We reduce circuit satisfiability (\textsc{Circuit-SAT}) problem to the homogeneous quadratic equations problem. \textsc{Circuit-SAT} asks whether a Boolean circuit, which consists of input gates as well as AND, OR, NOT gates with fan-in (at most) two and fan-out unbounded, has an assignment of its input that makes output true. It is a prototypical NP-complete problem, since the Cook-Levin theorem is sometimes proved on \textsc{Circuit-SAT} instead of \textsc{3SAT} (see e.g., Lemma 6.10 in \cite{AB09}). 

\begin{proof}[Proof of \Cref{prop:quadeq}]
    Given a \textsc{Circuit-SAT} instance $C$ with $n$ gates $y_1,\ldots,y_n$ (including input gates and logic gates), we build $n+1$ variables $\{x_1,\ldots,x_n,z\}$, and add the following equations:
    \begin{itemize}
        \item $x_i(x_i-z)=0, \forall i \in[n]$;
        \item for each AND gate $y_k = y_i \land y_j$ in $C$, an equation $x_k^2 = x_i x_j$;
        \item for each OR gate $y_k = y_i \lor y_j$ in $C$, an equation $z^2 - x_k^2 = (z-x_i) (z-x_j)$;
        \item for each NOT gate $y_k = \lnot y_i$ in $C$, an equation $z^2 - x_k^2 = x_i^2$;
        \item for $y_k$ being the output gate, an equation $z^2 = x_k^2$.
    \end{itemize}

    \paragraph{Completeness.}
    Let $y_1,\ldots,y_n \in \{0,1\}$ be an assignment to the gates of $C$ that makes the output true. It's easy to verify $$\left\{\begin{aligned}
        z & = 1, \\
        x_i & = y_i, &  \forall i \in [n]
    \end{aligned}\right.$$
    is a solution to the system of quadratic equations.

    \paragraph{Soundness.}
    If $z=0$, then by the first set of equations, each $x_i$ has to be 0 and this is an all-zero solution. Otherwise, setting each $y_i=x_i z^{-1}$ is a satisfying assignment of $C$ since it satisfies every gate in $C$ and ensures that the output gate is true.

    The distinguished-coordinate property follows by setting $z$ to be the last variable.
\end{proof}

\section{MDP Hardness with a Distinguished Coordinate}
\begin{proof}[Proof of \Cref{dist:mdp}]
    We proceed very similarly to \Cref{thm:mdp}. 
    We reduce from homogeneous 
    quadratic equations with a distinguished coordinate (\Cref{prop:quadeq}), and append a distinguished coordinate to the reduction in \Cref{thm:mdp}: 
    \begin{equation}
        V:= \{(CXC^T,X_{n,n}) :~ Q_1(X)=0,\dots , Q_m(X)=0,~
        X^T=X,~X\in\mathbb{F}_q^{n\times n}\}
    \label{hred}
    \end{equation}
    For completeness, let $x \in \mathbb \{0,1\}^{n}$ be a non-zero solution to the system \Cref{hq} that satisfies $x_n=1$. Then $((Cx)(Cx)^T,1)\in V$ and has Hamming weight at most 
    $(1+\frac{1}{3q})d^2+1$. 

    For soundness, note that if $(CXC^T,X_{n,n})$ is non-zero, then 
    $X$ must be non-zero. 
    The remainder of the analysis proceeds 
    identically to the soundness analysis in \Cref{thm:mdp}, and we conclude $\|V\|_0 \geq (1+\frac{1}{q})\cdot d^2$. 

    Finally we amplify the gap using tensoring (\Cref{tensoring}), 
    and we note that the distinguished coordinate property of the YES case is preserved under tensoring. 
\end{proof}
\end{document}